\newcolumntype{Y}{>{\centering\arraybackslash}X}
\newcommand*{\FN}{\operatorname{FN}}
\newcommand*{\FP}{\operatorname{FP}}
\newcommand*{\FNR}{\operatorname{FNR}}
\newcommand*{\FPR}{\operatorname{FPR}}
\newtheorem{theorem}{Theorem}[section]
\newtheorem{claim}{Claim}[section]
\theoremstyle{definition}
\newtheorem{definition}{Definition}[section]
 \newcommand*{\DUPLICATE}{\textsf{DUPLICATE}\xspace}
 \newcommand*{\UNSEEN}{\textsf{UNSEEN}\xspace}
\begin{document}
\title{Quotient Hash Tables - Efficiently Detecting Duplicates in Streaming Data}

\author[a,c]{R\'emi G\'eraud}
\author[a,b]{Marius Lombard-Platet\thanks{Part of this work was done while the author was visiting Ingenico Labs}}
	\author[a,c]{David Naccache}
	\affil[a]{D\'epartement d'informatique de l'ENS, \'Ecole normale sup\'erieure, CNRS, PSL Research University, Paris, France}
	\affil[b]{Be-Studys, Geneva, Switzerland}
	\affil[c]{Ingenico Labs Advanced Research, Paris, France}
\setcounter{Maxaffil}{0}
\renewcommand\Affilfont{\itshape\small}

\date{}
\maketitle

\begin{abstract}

This article presents the Quotient Hash Table (QHT) a new data structure for duplicate detection in unbounded streams. QHTs stem from a corrected analysis of streaming quotient filters (SQFs), resulting in a 33\% reduction in memory usage for equal performance. We provide a new and thorough analysis of both algorithms, with results of interest to other existing constructions.

We also introduce an optimised version of our new data structure dubbed Queued QHT with Duplicates (QQHTD).

Finally we discuss the effect of adversarial inputs for hash-based duplicate filters similar to QHT. 
\end{abstract}

\begin{multicols}{2}

	\section{Introduction and Motivation}
	We consider in this paper the following problem: given a possibly infinite stream of symbols, detect whether a given symbol appeared somewhere in the stream. It turns out that instances of this \emph{duplicate detection} problem arise naturally in many applications: backup systems
	~\cite{Fu15}
	or Web caches~\cite{Fan00}, search engine databases or click counting in web advertisement
	~\cite{Metwally:2005:DDC:1060745.1060753}
	, retrieval algorithms~\cite{Borg:2014:RSD:2652524.2652556}, data stream management systems~\cite{Babcock:2004:LSA:977401.978165}, or even cryptographic contexts where nonce-reuse is problematic\footnote{A nonce is a one-time use random number.}
	\cite{nonce}.

	It is generally not possible to store the whole stream in memory, therefore practical solutions to this problem must somehow trade off memory for accuracy. The special case of a single duplicate
	has a known optimal solution~\cite{one_dup,one_dup_opt}, but to the best of the authors knowledge no such results exists for detecting \emph{all} duplicates in an unbounded stream.

    Several algorithms were proposed to address this specific question, which we discuss below. Of particular interest to our investigation are Streaming Quotient Filter (SQF), as described by Dutta et al. in~\cite{Dut13}. We point out several crucial mistakes in the original analysis of SQF, and in doing so highlight that a more efficient data structure can be constructed along the same lines.
    We flesh out such a data structure, which we call the Quotient Hash Table (QHT), and provide
    a thorough analysis of both SQF and QHT. This analysis is our main contribution. QHT itself is not optimal and can be further improved: we describe such an improvement, dubbed Queued QHT with Duplicates (QQHT), and benchmark our new algorithms against popular alternatives.
	
	\section{Preliminaries and related work}
	\subsection{Duplicate Detection}\label{sub:dup_detection}
	Let $E = (e_1, \dotsc, e_n, \dotsc)$ be a (possibly infinite) sequence of elements $e_i \in \Gamma$, and write $U = |\Gamma|$. An element $e_i$ from $E$ is a \emph{duplicate} if $\exists e_j \in E, j < i$ such as $e_j = e_i$. Otherwise, $e_i$ is \emph{unseen}.\footnote{Note that by definition $e_1$ is always unseen.}
    
	The \emph{duplicate detection problem} is the question of finding all duplicates in a stream $E$.\footnote{Note that this problem is equivalent to a dynamic formulation of the approximate set membership problem.} We recall the following well-known result:
	\begin{theorem}
	Assume that each $e_i$ is sampled uniformly at random from $\Gamma$.
	Then perfect detection requires $U$ memory bits.
	\end{theorem}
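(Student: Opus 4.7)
The plan is to give a standard information-theoretic / adversarial argument showing that the internal state of any perfect duplicate detector must encode the set of elements seen so far. Fix a (possibly randomised) algorithm $A$ and, for a finite stream prefix $\sigma$, let $\mathrm{state}_A(\sigma)$ denote the content of $A$'s memory after processing $\sigma$. ``Perfect detection'' is taken to mean that, with probability $1$ over both the stream and $A$'s internal randomness, every output is correct; equivalently, $A$ is correct on every prefix that occurs with positive probability.

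First I would observe that, because each $e_i$ is uniform on $\Gamma$, every finite sequence of elements of $\Gamma$ occurs as a prefix with positive probability. Thus for every subset $S \subseteq \Gamma$ there is a prefix $\sigma_S$ (e.g.\ an arbitrary enumeration of $S$) after which the set of seen symbols is exactly $S$, and perfect detection must hold on all such prefixes. Next I would argue that any two distinct sets $S_1 \neq S_2$ must yield distinguishable memory states. Indeed, pick $x \in S_1 \mathbin{\triangle} S_2$, say $x \in S_1 \setminus S_2$. If the stream continues with $x$, the correct answers are \DUPLICATE in the first case and \UNSEEN in the second. Since $A$'s next output is a (possibly randomised) function only of its current state and of $x$, having $\mathrm{state}_A(\sigma_{S_1}) = \mathrm{state}_A(\sigma_{S_2})$ with positive probability would force $A$ to err with positive probability on one of the two continuations, contradicting perfect detection.

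Hence the map $S \mapsto \mathrm{state}_A(\sigma_S)$ must (almost surely) be injective on the $2^{U}$ subsets of $\Gamma$. Therefore $A$'s memory must take at least $2^{U}$ distinct values, which requires at least $\log_2 2^{U} = U$ bits, establishing the claim.

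\paragraph{Main difficulty.} The argument itself is elementary; the only subtlety is being precise about what ``perfect detection'' means for a randomised algorithm and ensuring that the pigeonhole step is applied to deterministic (or almost-sure) behaviour of the state. Once the definition is pinned down, the uniform-sampling hypothesis is used only to guarantee that every subset $S$ is realisable as a seen set with positive probability, so that the adversarial continuation by $x \in S_1 \triangle S_2$ is legitimate.
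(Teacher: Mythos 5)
Your proof is correct and follows essentially the same information-theoretic counting argument as the paper: distinct subsets of $\Gamma$ must map to distinct memory states, so at least $2^U$ states, i.e.\ $U$ bits, are needed. You simply make explicit (via the symmetric-difference distinguishing step and the handling of randomness) the injectivity claim that the paper states in one line as ``must be able to store any subset of $\Gamma$.''
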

	\begin{proof}
	A perfect duplicate filter must be able to store all streams $E_i = (e_1, \dotsc, e_n)$, i.e. must be able to store any subset of $\Gamma$, which we denote by $\mathcal P(\Gamma)$. Given that there are $|\mathcal P(\Gamma)| = 2^{|\Gamma|} = 2^U$ of them, according to information theory any such filter requires at least $\log_2(|\mathcal P(\Gamma)|) = \log_2(2^U) = U$ bits of storage.

	\end{proof}
	Because of this result, perfect duplicate detection is often out of reach when $U$ is big --- however probabilistic solutions are often sufficient for many applications. Such algorithms make errors: false positives (claiming a duplicate where there isn't) and false negative (missing a duplicate). 
	
	\subsection{Filters}
    \begin{definition}[Filter]
        A filter $F$ over the memory $\mathcal M$ is a tuple $F = (\mathcal{S}, \textsf{Detect}, \textsf{Insert})$, where: 
        \begin{itemize}
        \item $\mathcal S \in \mathcal M$ is the \emph{current state}
        \item $\textsf{Detect}: \Gamma \times \mathcal M \to \{\DUPLICATE, \UNSEEN\}$ 
        \item $\textsf{Insert}: \Gamma \times \mathcal M \to \mathcal M$
        \end{itemize}
    \end{definition}
    Here \DUPLICATE corresponds to a guess that the provided element is duplicate, and \UNSEEN that it is unseen. \textsf{Insert} corresponds to an update of the filter's memory state after observing a new element. Here $\mathcal M$ models the amount of memory (states) available to the algorithm.
    
    In practice, \textsf{Detect} and \textsf{Insert} are often merged into a single algorithm $\textsf{Stream} \gets \textsf{Insert} \circ \textsf{Detect}$. 

\begin{definition}[False positive (resp. negative)]\label{def:false_positive}
	If, for an unseen (resp. duplicate) element $e$, $\textsf{Detect}(e)$ outputs \DUPLICATE (resp. \UNSEEN), $e$ is called a \emph{false positive} (resp. \emph{negative}).
\end{definition}
    
\begin{definition}[FPR, FNR]
    The \emph{false positive rate} (FPR) of a stream $E$ is the frequency of false positive. The \emph{false negative rate} is similarly defined as the frequency of false negatives.
\end{definition}

    We are interested in filters that whose FNR and FPR can be kept low when $\mathcal M$ is bounded.
    
    \subsection{Hash-based filters}
    Filters rely on hashing to efficiently answer the duplicate detection problem.
    These filters are often a variation over the well-known Bloom filters~\cite{Blo70}. A Bloom filter uses an array $T$ of $M$ bits, initially all set to $0$. $k$ hash functions $h_i$ are also needed. Insertion of an element $e$ is made by setting all $T[h_i(e)]$ to $1$. Detection of an element $f$ is the AND of all cells $T[h_i(f)]$: if the AND value is $0$, then emit \UNSEEN, otherwise emit \DUPLICATE.
	
	It is easy to see that this approach has a FNR of 0. However, as the stream grows, the FPR gets worse, and in the limit of an infinite stream the FPR is 1. Many variants have been proposed in the literature to compensate for this effect \cite{Tar12}, mostly by allowing deletion \cite{Coh03,Cor09}, but doing so increases the FNR.
	
	Alternatively, other filters prefer to store fingerprints: this is the rationale behind several recent constructions \cite{Dut13,Fan14}.
    
    For the data structures most interesting to our question, we refer the reader to the algorithms described in~\cite{Den06, Fan14, She08, Dut13, Yoo10}.\footnote{Cuckoo filters~\cite{Fan14} requires a minimal adaptation for unbounded streams: in the original paper, failure is emitted after some number of relocations; we just discard the failure. Theoretical analysis of this new structure is not addressed in this paper.}
   
   Other data structures, such as \cite{Che17, Guo10} are not considered in this paper, as they require an unbounded amount of memory as the stream grows.
	
	\subsection{Streaming Quotient Filter}
	One construction which we must describe at length is the Streaming Quotient Filter (SQFs)~\cite{Dut13}.  Given an element $e$, a certain fingerprint\footnote{\cite{Dut13} refers to them as \enquote{signatures}, but we shun this term to avoid any claim of cryptographic properties.} $s(e)$ is stored in an array, at row $h(e)$ and a certain column amongst $k$. This array constitutes the filter's state.
	
	The filter's construction uses integers $q$, $k$, $r$ $r' < r$ and a hash function $h: \{0, 1\}^* \to \{0, 1\}^{q+r}$. The filter's state is an array of $2^q$ rows and $k$ columns, each holding a $\sigma$-bit element (with $\sigma = r' + \lceil \log_2(r + 1) \rceil $), or the special empty symbol $\bot$. The filter's state is initially $\bot$ in every cell.
	
	Then, \cite{Dut13} describes $\mathsf{Stream}(e)$ as follows:
	\begin{itemize}
		\item Compute $h(e)$. The $q$ most significant bits of $h(e)$ define $h_e^q$, and the $r$ least significant bits define $h_e^r$.
        \item Let $w_e$ be the Hamming weight of $h_e^r$, and let $h_e^{r'}$ be $r'$ bits deterministically chosen\footnote{E.g., the $r'$ least significant bits of $h_e^r$.} from $h_e^r$. Let $s(e) \gets h_e^{r'} \| w_e$ where $\|$ denotes concatenation.
		
        \item If $s(e)$ is already stored in the row numbered $h_e^q$, emit \DUPLICATE.
		\item Otherwise, store it in one empty cell of that row. If no empty cell exists in the row, store $s(e)$ in one \emph{random} cell of the row, replacing any fingerprint previously stored there. Emit \UNSEEN.
	\end{itemize}
	
	\section{Revisiting SQF: Quotient Hash Table}\label{sub:sqf_errors}
SQF is introduced and analysed in \cite{Dut13}. However, a careful reading reveals several crucial mistakes in that analysis. We focus here on two of them that directly impact the claim of SQF near-optimality. 
Note that here, as in the rest of this paper, hash functions are modelled as pseudo-random functions.
\begin{enumerate}
    \item Cells in the filter's state are not independent: in particular, in every row, non-empty cells hold different values (by design); 
    \item Terms in geometric sums of order $> 1$ cannot be neglected.
\end{enumerate}
Taking these effects into account (see Sections~\ref{sub:FPR} and \ref{sub:FNR}), and using the same approximation than \cite{Dut13} used\footnote{Even though not mentioned in \cite{Dut13}, the approximation (stemming from Catalan numbers) is only asmptotatically valid, but computations show that the approximation is correct even for small values of $r$, such as the ones used in practical applications.}, $\binom{2r}{r} \approx \frac{4^r}{\sqrt{\pi r}}$, we get the following asymptotic formulae for the FNR and FPR
\begin{align*}
    \FPR_\infty \approx \frac{k}{2^{r'}\sqrt{\pi r}} \qquad \FNR_{\infty} \approx 1 - \frac{k}{2^{r'}\sqrt{\pi r}}
\end{align*}
that disagree with \cite{Dut13} --- most importantly, the error rates do not decrease to $0$, contrarily to what was claimed. Interestingly, the suggested parameters ($r= 2$, $k = 4$, $r' = r/2 = 1$) indeed achieve an FNR of $0$ --- but also an FPR of 1.\footnote{With these values, only $4$ distinct fingerprints exist, and they can all be stored in the $4$ cells per row. When the filter is full, any duplicate will be reported as such, hence a FNR of $0$. But every new element will also be reported as duplicate, hence a FPR of $1$.}
    
    Further more, there is redundancy between the Hamming weight $w_e$ of $h_e^r$ and the reduced remainder $h_e^{r'}$. For instance, if $h_e^{r'}$ contains at least one bit set to $1$, we know that $w_e \neq 0$. Intuitively this means SQF is wasteful, and we could expect to avoid collisions in the filter's state by using a better adjusted encoding: this intuition happens to be correct as shown in Section~\ref{sub:application_sqf}.
    
    Finally, since the state table contains $2^q$ rows, with $k$ cell each, the total memory required by an SQF is $M = 2^q k \sigma$.
	
	\subsection{Full-size Hashing, Memory Adjustments}
	Our first observation is that SQF's fingerprint scheme (a hash and a Hamming weight) can be fruitfully replaced by a single hash function of the same size. Not only does this simplify the theoretical analysis, it also provides a much more efficient use of the available space. We also use more flexible hash functions, that give much more flexibility in adjusting the total memory $M$ of the filter. Combining these two effects, we obtain Algorithm~\ref{a:QHT} which we call Quotient Hash Table (QHT).
	
		\begin{algorithm}[H]
			\caption{QHT Setup and Stream}\label{a:QHT}
			
			\begin{algorithmic}[1]
			\Function{Setup}{$M, \sigma, k$}
			    \Comment $M > \sigma k$ and $0 < k \leq 2^\sigma$ 
				\State Let $N \gets \left\lfloor {M} / (k \cdot \sigma) \right\rfloor$.
				\State Choose a hash function $h$ over $\left[0, N - 1\right]$
				\State Choose a hash function $s$ over $[0, 2^\sigma - 1]$
				\State Let $T$ be a $N \times k$ array with $\sigma$-bit cells, initialized to $\bot$.
			\EndFunction 
			\end{algorithmic}
			
            \begin{algorithmic}[1]
                \Function{Stream}{$e$}
				\For{each cell $b_i$ in row $T[h(e)]$}
				\If{ $b_i = s(e)$}
				\State \Return $\DUPLICATE$
				\EndIf
				\EndFor
				\State Let $b_\emptyset$ be the first empty cell in row $T[h(e)]$.
				\If{$b_\emptyset$ does not exist}
				\State $b_\emptyset \stackrel{\$}{\gets} T[h(e)]$
				\EndIf
				\State Store $s(e)$ in bucket $b_\emptyset$
				\State \Return $\UNSEEN$
 				\EndFunction
			\end{algorithmic}
		\end{algorithm}

    \subsection{Empty Cells}\label{sub:empty}
    SQF and QHT as described above make essential use of the \enquote{empty} cells in $T$. The need for this feature is present for all fingerprint-based structures including Cuckoo filters \cite{Fan14}.\footnote{Other constructions do not face this issue, including SBF \cite{Den06} and b\_DBF \cite{She08}: in these schemes, $0$ always codes for absence.} However, a low-level implementation cannot rely on the availability of such a special value. Our options are to initialise all cells to $0$ and either
    
    \begin{enumerate}
    \item\label{item:empty_1} treat $0$ as a fingerprint;
    \item\label{item:empty_2} if an element has a fingerprint of $0$, reassign its fingerprint to $1$;
    \item\label{item:empty_3} while an element has a fingerprint $0$, compute a new fingerprint based on some deterministic scheme.
    \end{enumerate}
    The first option is at a risk of a high false positive rate, even for small streams: when a new element, whose fingerprint is $0$, should be stored in an empty bucket, it is instead dismissed as a duplicate. More specifically, before the filter is completely filled, a new element has probability at least $\frac{1}{S}$ to be false positive, which leads to a high number of false positive at the beginning of any stream. 
    
    The second option is significantly faster than the third\footnote{The third option needs, on average, $\frac{S}{S-1}$ hash computations for each insertion, whereas the second option only needs 1.}, which on the other hand has better statistical properties, making the analysis simpler. While the second option was preferred by \cite{Fan14} in their implementation\footnote{\url{https://github.com/efficient/cuckoofilter/blob/aac6569cf30f0dfcf39edec1799fc3f8d6f594da/src/cuckoofilter.h}}, we retain the slower, but easier to analyse option.

    \subsection{Semi-sorting}
    The technique of \emph{semi-sorting} was introduced in \cite{Fan14} to shave some extra storage.
    The idea is as follows: treating empty cells as buckets containing the \enquote{0} fingerprint, for each row, sort the cells by their fingerprints, and then encode the result as a number. 
    
    As an example, for $s=4$ and $k=4$ there are only 3,876 possible sorted states, which can be encoded using $12$ bits, as opposed to the $16$ bits required to store four 4-bit fingerprints.
    
    \subsection{Comparison with Hash Tables}
	As the name implies, QHT are related to hash tables. Indeed, a hash table is a QHT, wherein the number of rows is equal to $1$. In particular, QHT cannot have worse performances than such structures, and bear similarities with so-called \emph{compaction} techniques \cite{10.1007/3-540-56922-7_6}.

    \section{Error Rate Analysis}

	\subsection{False Positive Rate}\label{sub:FPR}
	
	Consider a QHT with $N$ rows, $k$ buckets per row, and $s$-bits fingerprints. For simplicity, further assume that no bucket is empty (which is true after some time), and that the stream is sampled uniformly at random from $\Gamma$.
	
	\begin{theorem}
	For a QHT of $N$ rows, $k$ buckets per row and $S$ possible fingerprints, as the number of insertions goes to infinity, the FPR goes to $k/S$. Moreover the probability that an unseen element inserted after $m$ other elements triggers a false positive is $\displaystyle \FP_m = \frac{k}{S}\left(1 - \left[1 - \frac{1}{Nk}\right]^m\right)$.
	\end{theorem}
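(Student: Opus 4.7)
The plan is to fix a single bucket $b_j$ in the row $T[h(e)]$ and track $p_t$, the probability that $b_j$ stores the fingerprint $s(e)$ after $t$ insertions have been processed. The false positive event can then be expressed as a disjoint union over the $k$ buckets of the row, using the QHT design invariant that fingerprints within a row are pairwise distinct. That invariant immediately gives $\FP_m = k\,p_m$, so the real work is computing $p_m$.

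To set up the recursion for $p_t$, I would work in the steady-state regime imposed by the hypothesis that no bucket is empty. A new element $e_t$ can alter $b_j$ only if $h(e_t) = h(e)$ (probability $1/N$) and $e_t$ is declared \UNSEEN by the row, which triggers a uniform overwrite picking $b_j$ with probability $1/k$. Conditioning on the current value of $b_j$ yields a loss contribution and a gain contribution. For the loss, given $b_j = s(e)$, the other $k-1$ cells hold fingerprints distinct from $s(e)$, so an incoming $e_t$ with a non-matching fingerprint occurs with probability $(S-k)/S$. For the gain, given $b_j \neq s(e)$, the new fingerprint $s(e_t) = s(e)$ is accepted only if $s(e)$ is absent from the entire row; by the distinctness invariant and symmetry this conditional probability is $(1 - kp_{t-1})/(1 - p_{t-1})$.

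Putting these together produces a recursion whose right-hand side initially looks nonlinear in $p_{t-1}$. The main obstacle I anticipate is the algebraic verification that the awkward conditional factor $(1 - kp_{t-1})/(1 - p_{t-1})$ cancels cleanly once one expands the loss and gain contributions and combines the two linear terms in $p_{t-1}$; this cancellation reduces the recursion to the affine form
\begin{align*}
p_t = \left(1 - \frac{1}{Nk}\right) p_{t-1} + \frac{1}{NkS}.
\end{align*}

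Finally, with the initial condition $p_0 = 0$, the standard solution of an affine recursion gives $p_m = \frac{1}{S}\bigl(1 - (1 - 1/(Nk))^m\bigr)$, and multiplying by $k$ (via the disjoint union argument above) yields the announced formula for $\FP_m$. The asymptotic claim $\FP_\infty = k/S$ then follows by letting $m \to \infty$, since $(1 - 1/(Nk))^m$ tends to $0$.
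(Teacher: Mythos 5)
Your proof is correct, but it follows a genuinely different route from the paper. The paper argues combinatorially over the stream: it observes that only the \emph{last} false duplicate (the most recent element colliding with $e$ in both hash and fingerprint) can cause a hard collision, computes its per-step survival probability $1 - \frac{S-k}{k(NS-1)}$ against later evictions, and sums the resulting geometric series over the position of that last false duplicate, which telescopes to the stated $\FP_m$. You instead run a forward occupancy recursion: tracking $p_t = \Pr[b_j = s(e)]$ for a fixed bucket, your loss term $\frac{S-k}{NSk}$ and gain term $\frac{1-kp_{t-1}}{NSk}$ are both right (the conditional factor $\frac{1-kp_{t-1}}{1-p_{t-1}}$ is legitimate by the within-row distinctness and bucket symmetry, and it does cancel against the $1-p_{t-1}$ weight), yielding the affine recursion $p_t = (1-\frac{1}{Nk})p_{t-1} + \frac{1}{NkS}$ with fixed point $1/S$, and the disjointness of the events $\{b_j = s(e)\}$ across buckets gives $\FP_m = kp_m$ exactly as claimed. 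Your route buys a cleaner mechanism: the awkward conditioning the paper needs (restricting to $NS-1$ states given that $e_i$ is the last false duplicate, and the argument that newer false duplicates ``replace'' older ones) is absorbed automatically into the gain term, and the Markov-chain viewpoint makes the asymptotic fixed point $1/S$ transparent; the paper's decomposition, on the other hand, exposes more directly which element is responsible for the collision. Both arguments rest on the same idealizations (no empty buckets, each stream element treated as a fresh uniform hash/fingerprint pair). One small finishing touch: the theorem's first claim concerns the FPR, defined as the frequency of false positives over the stream, i.e.\ $\FPR_n = \frac1n\sum_{m=1}^n \FP_m$; your limit $\FP_m \to k/S$ gives this only after an appeal to Ces\`aro means, a one-line step the paper makes explicitly and which you should add.
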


\begin{proof}
	An element $e$ is a false positive if and only if $e$ has not been encountered yet, but $\textsf{Detect}(e) = \DUPLICATE$. This event is triggered by the presence, in the filter, of another element $e'$ with a hash and fingerprint colliding with those of $e$. $e'$ is called a \emph{false duplicate}, and if $e'$ is still in the filter when $e$ arrives, we refer to the event as a \emph{hard collision}.
	
	Our first remark is that the only false duplicate that may create a hard collision with $e$ is the last false duplicate inserted before $e$ arrives: let us assume that $e_1$, $e_2$ are false duplicates, and that $e_1$ arrives before $e_2$. When we insert $e_2$ in the filter, $e_1$ is either still in the filter or has been evicted.
    \begin{itemize}
    \item If $e_1$ has been evicted, then $e_1$ will not hardly collide with $e$.
    \item If $e_1$ is still in the filter, then $e_2$ will be claimed as a duplicate and dismissed.
    \end{itemize}
    However, if we look at the table $T$ storing all fingerprints, dismissing $e_2$ is strictly equivalent to replacing $e_1$ by $e_2$. Consequently, every false duplicate is erased by any new false duplicate, and only the last false duplicate can cause a hard collision. As a result, we will only focus on the probability that the last false duplicate (before $e$ arrives) causes a hard collision.
	
	Let us assume that the last false duplicate appears at position $i$ of the stream $E = \{ e_1, e_2, \dotsc, e_m, e \}$, in other words, $e_i$ is the last false duplicate in the stream before $e$.
	
	Now, $e_i$ has to remain in the filter until $e$ arrives, even though new elements are added. Let us suppose that element $e_j$, $i < j \leq m$, does not evict $e_i$ from the filter. For $e_i$ to \emph{be} evicted, the following conditions must be true: 
	\begin{itemize}
		\item $h(e_j) = h(e_i)$
		\item $s(e_j)$ is different from all the other fingerprints stored in the row $T[h(e_j)]$ (knowing that one of the buckets contains $s(e_i)$)
		\item $s(e_j)$ is inserted into the bucket in which $s(e_i)$ is stored
	\end{itemize}
    Since we know that $e_i$ is the last false duplicate, we cannot simultaneously have $h(e_j) = h(e_i)$ and $s(e_j) = s(e_i)$. As such, the first two conditions are not independent. Let $P_{\text{hs}}$ be the probability that these two conditions are satisfied. Given that $e_i$ is the last false positive, there are only $NS-1$ possibilities for the couple ($h(e_j), s(e_j))$. Moreover, among these $NS-1$ states, only $S-1$ verify the first condition, and among these $S-1$ states, only $S-k$ verify the second condition. Finally, $P_{\text{hs}} = \frac{S-k}{NS-1}$. If we assign the last event to the probability $P_{\text{selection}}$, we immediately get $P_{\text{selection}} = \frac{1}{k}$.
    
    Finally, the probability $P_{\neg\text{evict}}$ of $e_i$ not being evicted by $e_j$ is $P_{\neg \text{evict}} = 1 - P_{hs}P_{\text{selection}}$ and:
    $
    P_{\neg \text{evict}} = 1 - \frac{S-k}{k(NS-1)}
    $

	Now, $e_i$ has to avoid eviction by every element before $e$ arrives, i.e. by all elements $e_{i+1}, \dotsc, e_m$, which happens with probability $P({\text{hc}})_i = \left(P_{\neg\text{evict}}\right)^{m-i}$.
	
	At that point, we know the probability that a hard collision happens when the last false duplicate has been inserted at position $i$.
	
	The probability of any element $e'$ being a false duplicate is $P_\text{fd} = \frac{1}{N} \frac{1}{S}$. In the stream $E = (e_1, \dotsc, e_m, e)$, the last false duplicate is $e_m$ with probability $P_\text{fd}$; it is $e_{m-1}$ with probability $\left(1 - P_\text{fd}\right)P_\text{fd}$; and $e_{m-k}$ with probability $\left(1 - P_\text{fd}\right)^k P_\text{fd}$. The probability that the next element will result in a false positive after $m$ elements are inserted, is equal to the probability that the last false duplicate is not evicted. Thus the probability $\FP_m$ that a false positive happens after $m$ elements are inserted is 
	\begin{align*}
		\FP_m &= \sum_{j=1}^m (e_j\text{ is the last false duplicate } \times  
        \\& \qquad \qquad e_j \text{ is not evicted until } e \text{ arrives})\\
		&= \sum_{j=1}^m \left(1 - P_\text{fd}\right)^{m-j} P_\text{fd}\times P(\text{hc})_j\\
		&= \sum_{j=1}^m \left(1 - P_\text{fd}\right)^{m-j} P_\text{fd} \left(1 - \frac{S-k}{k(NS-1)}\right)^{m-j}\\
        &= P_\text{fd}\frac{1 - \left[\left(1 - P_\text{fd}\right)\left(1 - \frac{S-k}{k(NS-1)}\right)\right]^m}{1 - \left(1 - P_\text{fd}\right)\left(1 - \frac{S-k}{k(NS-1)}\right)}\\
	         \end{align*}
    \begin{align*}
        &=  \frac{1}{NS}\frac{1 - \left[\left(1 - \frac{1}{NS}\right)\left(1 - \frac{S-k}{k(NS-1)}\right)\right]^m}{1 - \left(1 - \frac{1}{NS}\right)\left(1 - \frac{S-k}{k(NS-1)}\right)}\\
        &= \frac{1-\left[1-\frac{S-k}{(NS-1)k} - \frac{1}{NS} + \frac{S-k}{kNS(NS-1)}\right]^m}{1+ \frac{NS(S-k)}{(NS-1)k}-\frac{1}{k}\frac{S-k}{NS-1}}\\
        &= \frac{1 - \left[1 - \frac{1}{NS} - \frac{S - k}{NSk}\right]^m}{1 + \frac{S - k}{k}} \\
        &= \frac{k}{S}\left(1 - \left[1 - \frac{1}{Nk}\right]^m\right)
	\end{align*}
	We now have the probability that a new element $e$, inserted after $m$ insertions, is detected as a false positive.

    Assuming the stream is of size $n$, and noting $\FPR_n$ its false positive rate, we get that $\FPR_n$ is equal to the averages of all $\FP_m$ of its stream, i.e., $\displaystyle\FPR_n = \frac{1}{n} \sum_{m=1}^{n} \FP_m$. Given that $\displaystyle\lim_{m\to \infty} \FP_m = \frac{k}{S}$ and using Ces\`{a}ro's mean properties, we get, as one could have expected, 
	$\FPR_n \xrightarrow[n\to \infty]{}\frac{k}{S}$.
\end{proof}

Thanks to the expression of $\FP_m$, we can see that the more rows there are, the slower the FPR reaches its asymptotic (saturated) value.

A similar phenomenon can be observed with the Stable Bloom Filter \cite{Den06}\footnote{In a Stable Bloom Filter, the FPR is bounded by $\displaystyle\left(1-\left(\frac 1 {1 + \frac{1}{P\left(1/K - 1/m\right)}}\right)^{\text{Max}}\right)^K$, where $m$ is the number of rows, and $K$, $P$, $\text{Max}$ are diverse filter parameters. We clearly see that a higher number of rows will only decrease the FPR down to a certain point, but no further.}: adding rows will only decrease the FPR to a certain point. 
An other structure adapted to streaming data we found, the block decaying Bloom Filter \cite{She08}, operates on a sliding window and therefore did not use the same False Positive definition as the one we did\footnote{More precisely, their definition false positive definition is restrained to the sliding window. So an element is a false positive if is not already present \emph{in the sliding window}, and yet marked as a duplicate.}.

\subsubsection{Application to SQF}\label{sub:fpr_sqf}

One should be tempted to directly apply this result to an SQF. However, as pointed out earlier, in an SQF fingerprints are correlated and therefore not equiprobable\footnote{For example, consider the SQF with the parameters $r=3$, $r' = 1$. Only the hash $h^r_1 = 000$ will lead to the fingerprint $000$, whereas both hashes $h_2^{r} = 001$ and $h_3^{r} = 010$ will lead to the fingerprint $001$.}.

However, for an optimal SQF, fingerprints are equiprobable so the analysis above holds for optimal SQFs. In the general case, the authors of \cite{Dut13} have approximated the probability of fingerprint collision in an SQF with $\frac{1}{2r'\sqrt{\pi r}}$. Replacing $\frac 1S$ with this probability in our analysis, we get an \emph{approximate} asymptotic FPR of $\frac{k}{2^{r'}\sqrt{\pi r}}$ for SQFs, as announced in Section~\ref{sub:sqf_errors}.

	\subsection{False Negative Rate}\label{sub:FNR}
	\begin{theorem}
	For a QHT of $N$ rows, $k$ buckets per row, and $S$ different fingerprints, assuming $U \gg N$, then as the number of insertions goes to infinity, $\displaystyle\FNR_{\infty} = 1 - \frac{k}{S}$.
	\end{theorem}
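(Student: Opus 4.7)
The plan is to recognize that a duplicate of a previously inserted element $e$ is correctly flagged as \DUPLICATE exactly when $s(e)\in T[h(e)]$ at the moment the duplicate arrives. Hence the quantity of interest is
\[
\FN_m := \Pr\bigl[s(e)\notin T[h(e)]\text{ after }m\text{ further insertions}\bigr].
\]
I would track the indicator $X_m := \mathbf{1}[s(e)\in T[h(e)]]$ as a two-state Markov chain, with states \emph{in} and \emph{out}, driven by successive stream insertions, and then read off $\FNR_\infty$ from its stationary distribution. The hypothesis $U\gg N$ is used here to treat each subsequent insertion as coming from a fresh element whose $(h,s)$-pair is independently uniform on $[N]\times[S]$, so that the per-step transitions depend only on $X_m$ and not on the full history.

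Next, I would compute the transition probabilities by case analysis, mirroring Section~\ref{sub:FPR}. For $\text{in}\to\text{out}$, the newcomer must (i) target row $h(e)$, probability $1/N$; (ii) carry a fingerprint distinct from all $k$ already-stored ones, probability $(S-k)/S$ by the row-distinctness invariant; and (iii) land in the bucket currently holding $s(e)$, probability $1/k$; altogether $\frac{S-k}{NSk}$. For $\text{out}\to\text{in}$, the newcomer must target the row ($1/N$) and carry fingerprint exactly $s(e)$ ($1/S$); since the row is full and $s(e)$ is absent, this triggers a genuine write that restores $s(e)$, giving $\frac{1}{NS}$.

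Solving the balance equation $\pi_{\text{in}}\cdot\frac{S-k}{NSk}=\pi_{\text{out}}\cdot\frac{1}{NS}$ together with $\pi_{\text{in}}+\pi_{\text{out}}=1$ yields $\pi_{\text{in}}=k/S$, hence $\FNR_\infty = 1-k/S$. Alternatively, the one-dimensional linear recurrence for $P_m:=\Pr[X_m=\text{in}]$ admits the closed form $P_m = k/S + (1-k/S)\bigl(1-1/(Nk)\bigr)^m$, which mirrors the $\FP_m$ formula and lets one conclude by a Cesàro argument on the positions of duplicates, exactly as in the FPR proof.

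The delicate point — and my expected main obstacle — is the bookkeeping in the $\text{out}\to\text{in}$ transition: one must argue that an element $e'\neq e$ satisfying $h(e')=h(e)$ and $s(e')=s(e)$ actually \emph{restores} $s(e)$ in the row rather than being silently dismissed, which relies precisely on the row-distinctness invariant highlighted in Section~\ref{sub:sqf_errors} (the very point overlooked in the original SQF analysis). A secondary care point is invoking $U\gg N$ cleanly enough to rule out fingerprint-distribution skew from repeated stream elements.
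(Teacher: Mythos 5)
Your proposal is correct, and it reaches the result by a genuinely different (and arguably cleaner) route than the paper. The paper decomposes the false-negative event in two stages: the last true occurrence $e_i$ is evicted at some time $j$ (a geometric term with per-step eviction probability $\frac{S-k}{k(NS-1)}$), and then no false duplicate inserted after $j$ survives until $e$ arrives, which is handled by reusing the $\FP_{m-j}$ formula; this produces a double sum over $i$ and $j$, weighted by the geometric law of the gap to the last occurrence, and is evaluated with Ces\`aro means in an appendix, yielding an exact expression with $U$-dependent correction terms from which $1-\frac kS$ follows when $U \gg Nk$. You instead observe that the filter's answer on the duplicate depends only on whether $s(e)$ is currently in row $T[h(e)]$, and track that single indicator as a two-state chain with unconditional transition rates $\frac{S-k}{NSk}$ (eviction) and $\frac{1}{NS}$ (restoration by a colliding element), whose stationary mass on ``in'' is $k/S$; your closed form $P_m = \frac kS + \bigl(1-\frac kS\bigr)\bigl(1-\frac{1}{Nk}\bigr)^m$ is the exact mirror of the paper's $\FP_m$. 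This unification of ``true copy still present'' and ``false duplicate present'' into one event is precisely why your rates differ superficially from the paper's (unconditional $NS$ versus the paper's $NS-1$ from conditioning on non-collision): the two bookkeepings are consistent and give the same limit. What you lose relative to the paper is the explicit finite-$U$ formula for $\FNR_\infty$ (which the paper uses to remark that larger $N$ can drive the error down before saturation); you would recover an analogous correction by averaging your $1-P_m$ over the geometric gap distribution with parameter $1/U$, which is also where the hypothesis $U \gg Nk$ (the theorem's $U \gg N$) enters, as you indicate. The remaining implicit assumptions you make --- rows full with $k$ distinct fingerprints, intermediate elements treated as fresh with uniform independent $(h,s)$ --- are exactly the standing assumptions of the paper's own analysis, so they are not gaps relative to the intended level of rigour.
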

	
	\begin{proof}
	Assume that $e$ is a duplicate, we denote by $e_k$ the last element in the stream such that $e_i = e$. Following the same reasoning as for the FPR, $e_i$ will trigger a false negative if and only if $e_i$ is removed from the filter before $e$ arrives, and if any false duplicate of $e$, inserted between the removal of $e_i$ and $e$, is deleted before $e$ arrives.
	
	Let us assume that $e_i$ is deleted at time $j$ (this happens with probability $\displaystyle P_{i,j}^\text{Del}$) by something else than a false duplicate. Using similar arguments than in the previous section, we have $P_{i,j}^\text{Del} = \left(1- \frac{S - k}{k(NS-1)}\right)^{j-i-1} \frac{S - k}{k(NS-1)}$. 
    
    The probability that all false duplicates, inserted after time $j$, are deleted before $e$ arrives is $1 - \FP_{n-j}$.
	
	Let $a = \frac{S-k}{k(NS - 1)}$, $b=\frac{k}{S}$ and $c = 1 - \frac{1}{Nk}$, and denote by $\FN_{i,m}$ the probability, in a stream $E = (e_1, \dotsc, e_m, e)$ where the last duplicate of $e$ is inserted at $i$, that $e_i$ is deleted before $e$ arrives and no false positive persists until $e$,
	\begin{align*}
		\FN_{i,m} = {} & \sum_{j=i+1}^m P_{i,j}^\text{Del} (1-\FP_{m-j}) \\
		 = {}& \sum_{j=i+1}^m \left(1- a\right)^{j-i-1} a \left(1 - b\left(1-c^{m-j}\right)\right)\\
		 = {} & a\left(1-b \right)\sum_{j=i+1}^m \left[\left(1- a\right)^{j-i-1}\right] \\ & 
         	+ ab\sum_{j=i+1}^m \left[\left(1- a\right)^{j-i-1}c^{m-j}\right]\\
	\end{align*}
	Given that $1 - a \neq c$, we get:
	\begin{align*}
		\FN_{i,m} = {}& \left(1-b\right) \left(1 - \left(1-a\right)^{m-i}\right) + ab \frac{\left(1-a\right)^{m-i} - c^{m-i}}{1 - a - c}
	\end{align*}
    The probability $\FN_m$ of $e$ to be a false negative is then $\displaystyle\sum_{i=1}^m P_{\text{dup}, i} \cdot \FN_{i,m}$, where $P_{\text{dup}, i}$ is the probability that the last duplicate already seen is $e_i$, so $P_{\text{dup}, i} = \left(\frac{U-1}{U}\right)^{m-i}\frac{1}{U}$.
    
    We obtain the probability $\FN_m$ that the $(m+1)$\textsuperscript{th} element $e$ of the stream will be a false negative:
	\begin{align*}
		\FN_m & = \sum_{i=1}^m P_{\text{dup}, i} \cdot \FN_{i,m}
	\end{align*}
    For a stream of $n$ elements, the false negative rate $\FNR_n$ is defined as the average error probability: $\FNR_n = \frac{1}{n} \sum_{m=1}^{n} \FN_m$
    
    We show, in Appendix~\ref{app:FN_derivation}, that for $\displaystyle\FNR_\infty = \lim_{n \to \infty} \FNR_n$,
	\begin{align*}
		\FNR_\infty = {} & 1-b - \frac{1-b}{U - (U-1)(1-a)} \\
				 & + \frac{ab}{(1-a-c)(U - (U-1)(1-a))} \\
				 & - \frac{ab}{(1-a-c)(U - (U-1)c)}
	\end{align*}
    A not obvious consequence of the above expression for FNR is that when $N$ increases, which corresponds to using more memory, it is possible to achieve an arbitrary small error rate.\footnote{This fact is straightforward but requires the substitution of $a, b, c$ by their full expression, before taking the limit.} At some point however, the memory is so large that every stream element can be stored, and there is no need for probabilistic data structures anymore. In practice we expect memory to be a limited resource, so this situation is unlikely to present itself.
	
	Finally, assuming $U \gg Nk$ (or even $U \gg N$), i.e., that there are more distinct elements in the stream that what the filter is able to store, we get the limit rate of $1 - b$, which is $\FNR_\infty \approx 1 - \frac{k}{S}$.
\end{proof}
	Note that $\FNR_\infty + \FPR_\infty = 1$.

The value of $\FNR_\infty$ also gives (using the same corrections as for the FPR in Section~\ref{sub:fpr_sqf}) the FNR for SQF. 

    \subsection{Error Rate and Filter Saturation}\label{sub:saturation}
    
    \begin{claim}
    The asymptotic relation $\FNR_\infty + \FPR_\infty = 1$ is \emph{universal} (as long as $U \gg M$) amongst hash-based duplicate detection filters and is the result of the filter's saturation. Furthermore, when a filter reaches saturation, it behaves similarly, from the error rate point of view, to a filter answering \DUPLICATE at random (we will call such filters random filters).
    \end{claim}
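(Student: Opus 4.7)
The plan is to formalise \enquote{saturation} as the statement that the filter's state has reached a (near-)stationary distribution, and then to show that in this regime the filter's answer on any input depends only on the input's fingerprint, not on its history. First, model the state of a bounded-memory hash-based filter as a Markov chain driven by i.i.d.\ uniform elements of $\Gamma$. Under mild irreducibility assumptions on the eviction rule, as satisfied by QHT and SQF, this chain admits a unique stationary distribution $\pi$, and the \emph{saturated} regime is the one in which the state is distributed (approximately) according to $\pi$, independently of any element not yet inserted.

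Define $p := \Pr_{\pi}[\textsf{Detect}(e) = \DUPLICATE]$, where $e$ is drawn uniformly from $\Gamma$ and is independent of the state. For an unseen $e$ this independence is automatic, so $\FPR_\infty = p$ by definition.

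For the FNR, the key observation is that under $U \gg M$ the previous occurrence of any duplicate $e$ has almost surely been overwritten by the time $e$ returns: the expected residency of a fingerprint in a QHT-like filter is $O(Nk) = O(M)$, whereas the expected spacing between two occurrences of the same symbol is $\Theta(U)$. Conditioned on the previous copy having been evicted, the state is independent of $e$, so the probability of outputting \DUPLICATE is again $p$, giving $\FNR_\infty = 1 - p + o(1)$. Summing yields the universal identity $\FPR_\infty + \FNR_\infty = 1$. The comparison with a purely random filter is then immediate: a filter that outputs \DUPLICATE with probability $p$ irrespective of its input also has $\FPR = p$ and $\FNR = 1-p$, matching the saturated filter's error profile exactly.

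The main obstacle is justifying the eviction step in full generality: one must argue that, in the stationary regime, every cell is overwritten in expected time $O(M)$. For QHT and SQF this follows directly from the computations in Sections~\ref{sub:FPR} and \ref{sub:FNR}, but pathological filters (e.g.\ ones that stop updating once full) fail the claim. This caveat is what the phrase \enquote{similar to QHT} in the statement implicitly encodes; making it into an explicit hypothesis on the replacement policy would be the delicate part of a fully rigorous write-up.
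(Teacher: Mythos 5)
Your proof is sound but follows a genuinely different route from the paper's. The paper argues at the level of information capacity: it first notes that a filter answering \DUPLICATE at random with fixed probability $p$ trivially satisfies $\FPR + \FNR = 1$, and then argues that after $fM$ insertions a filter of $M$ bits remembers only a fraction $1/f \ll 1$ of the stream, so its best strategy is \enquote{almost indistinguishable} from such a random strategy --- a heuristic that leaves implicit both why the answer distribution is the same on duplicates and on unseen elements, and why $p$ is a fixed constant (the paper only observes the latter empirically). You instead formalise saturation as stationarity of the state Markov chain, define $p$ as the stationary probability of answering \DUPLICATE on an independent query, get $\FPR_\infty = p$ from the automatic independence of the state from an unseen element's hashes, and get $\FNR_\infty = 1-p$ by showing that under $U \gg M$ the earlier copy of a duplicate has been evicted with probability $1-o(1)$ (residency $O(M)$ versus spacing $\Theta(U)$), after which detection is again an accidental collision with probability $p$. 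This supplies exactly the step the paper leaves informal, and it matches the mechanism visible in the paper's own exact QHT computations (Sections~\ref{sub:FPR} and~\ref{sub:FNR}), at the price of explicit hypotheses on the replacement policy that the paper sidesteps by restricting attention to \enquote{filters similar to QHT}. One small quibble: your closing caveat overstates the failure mode --- a filter that freezes once full still satisfies $\FPR_\infty + \FNR_\infty \approx 1$ when $U \gg M$, since late duplicates are then detected only through accidental fingerprint collisions, at the same rate as false positives; what such a filter breaks is your irreducibility/stationarity hypothesis (so your particular argument needs a separate case), not the claimed identity itself.
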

    
    \begin{proof}
    In order to see this, first note that random filters always verify the relation $\FNR + \FPR = 1$. Given that a random filter will return \DUPLICATE with a probability of $p$, an unseen element will be classified as \DUPLICATE with probability $p$, and a duplicate will be classified as \UNSEEN with probability $1-p$, hence the result.
    
    Now, on infinite streams with infinitely many different elements, filters are saturated with information. Given that a filter can only store at most one element per bit (cf. Section~\ref{sub:dup_detection}), thus a filter of size $M$ can remember at most $M$ elements. However, after $fM$ insertions (with $f \gg 1$), the filter remembers at most a tiny fraction $\frac 1f \ll 1$ of the stream. Having reached its saturated state, the filter has an extremely tiny probability $\frac 1f \approx 0$ of correctly guessing whether the incoming element is a duplicate or not, given what the filter actually knows about the stream. For this reason, the best strategy of a saturated filter is almost indistinguishable from a random strategy, i.e. randomly outputting \DUPLICATE. When the stream grows indefinitely, the filter becomes asymptotically equivalent to a random filter.
    \end{proof}
    
     Furthermore, in practical cases, we observe that duplicate filters are equivalent to a specific kind of random filters: they answer \DUPLICATE with some fixed probability $p$, $p$ depending on the filter's nature and its parameters (i.e., $p$ does not change with time).
    
    \paragraph{Interpretation}
    The interpretation of these results could suggest that streaming filters are useless: they need more memory than a random filter, despite being asymptotically equivalent. However, this is only true because of our hypotheses and definitions: we define a false negative to be a duplicate element claimed as unseen by the filter. However, after some amount of time, it is often acceptable that the element may be considered as unseen again: for instance a nonce is theoretically unique, but in practice after a reasonable amount of time nonce reuse is not a vulnerability. For this reason, adapting false positive and false negative to sliding windows may be relevant here. Moreover, we assumed that all elements of the stream had the same probability of occurrence. In practice, this hypothesis is not always correct, and as we will see in Section~\ref{sub:benchmark}, filters operating on real data perform significantly better than random filters, and resist better to saturation.
	    
    \subsection{Comparing QHT to SQF}\label{sub:application_sqf}
    
    Let us compare the memory required for a QHT to reach the same error rates than an SQF. 

    Given that both FPR and FNR depend only on $k$, $N$ and $S$, imposing the equality on these parameters ensures that both filters have exactly the same FPR and FNR.
    
    Note that $S$ is not a user-chosen parameter, but rather a consequence of other parameters.

    \begin{theorem}
    For exactly the same FPR and FNR, a QHT requires 33\% fewer memory than an SQF.
    \end{theorem}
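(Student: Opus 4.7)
The plan is to reduce the memory comparison to a per-cell comparison and then to instantiate it for the parameters at which the FPR/FNR formula applies cleanly to both structures. Both filters have total memory of the form $M = Nk\sigma$ (rows $\times$ cells per row $\times$ bits per cell). By Sections~\ref{sub:FPR} and~\ref{sub:FNR}, the asymptotic $\FPR$ and $\FNR$ depend only on $k$, $N$, and $S$, so matching error rates across the two structures forces these three quantities to coincide. With $N$ and $k$ therefore equal, the memory ratio simplifies to $\sigma_{\text{QHT}}/\sigma_{\text{SQF}}$, and the theorem becomes a statement about cell widths.

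For SQF, $\sigma_{\text{SQF}} = r' + \lceil \log_2(r+1)\rceil$ by construction. I would then compute the effective $S_{\text{SQF}}$ by enumerating the equiprobable fingerprint values. Writing $h_e^r = h_e^{r'} \| h_e^{r-r'}$ with the two halves independent and uniform, the fingerprint $s(e) = h_e^{r'} \| w_e$ is a deterministic function of the pair $(h_e^{r'}, w(h_e^{r-r'}))$, because $w_e = w(h_e^{r'}) + w(h_e^{r-r'})$ and $w(h_e^{r'})$ is fixed once $h_e^{r'}$ is. For the parameters originally suggested in~\cite{Dut13}, namely $r=2$ and $r'=1$ (so $r-r'=1$), this enumeration yields exactly $2^{r'+1} = 4$ distinct fingerprint values, each occurring with probability $1/4$; equiprobability is therefore genuine and the formula $\FPR_\infty = k/S$ from Section~\ref{sub:FPR} applies verbatim with $S_{\text{SQF}} = 4$.

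For QHT the hash $s$ outputs uniformly over $\{0,1\}^{\sigma_{\text{QHT}}}$, so $S_{\text{QHT}} = 2^{\sigma_{\text{QHT}}}$. Imposing $S_{\text{QHT}} = S_{\text{SQF}} = 4$ gives $\sigma_{\text{QHT}} = 2$, whereas $\sigma_{\text{SQF}} = 1 + \lceil\log_2 3\rceil = 3$. The memory ratio is exactly $2/3$, i.e.\ QHT uses $33\%$ fewer bits for identical asymptotic FPR and FNR.

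The main obstacle is identifying $S_{\text{SQF}}$ correctly: the naive count $2^{r'}(r+1)$ overcounts because the Hamming-weight field duplicates information already carried by $h_e^{r'}$, and even the corrected count $2^{r'}(r-r'+1)$ only yields an \emph{equiprobable} distribution on fingerprints in the regime $r - r' = 1$, which is precisely the regime the suggested parameters sit in. The delicate step is therefore to combine the collision-probability calculation on $(h_e^{r'}, w(h_e^{r-r'}))$ with the equiprobability hypothesis underlying the FPR formula of Section~\ref{sub:FPR}; once that is pinned down, the $33\%$ follows from direct substitution of the cell widths.
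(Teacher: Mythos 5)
Your proposal is correct and follows essentially the same route as the paper: match $(N,k,S)$ so the error rates coincide, compute $S_{\text{SQF}}$ by discounting the redundancy between $h_e^{r'}$ and the Hamming weight (giving $2^{r'}(r-r'+1)$, i.e.\ $4$ at $r=2$, $r'=1$), and compare per-cell widths to obtain the ratio $2/3$; the paper merely keeps the general ratio $\frac{r'+\lceil\log_2(r-r'+1)\rceil}{r'+\lceil\log_2(r+1)\rceil}$ before specializing. Your explicit check that the fingerprints are equiprobable when $r-r'=1$, so the $\FPR_\infty = k/S$ formula applies exactly rather than approximately, is a point the paper leaves implicit, but it does not change the argument.
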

    The proof is made through the following subsections.
    
    \subsubsection{Deriving S from Filters Parameters}
    
    For a QHT, $S$ is derived from the number of bits of the fingerprint $s$, with the straightforward relation\footnote{If we are on a system without the \textsf{empty} feature (see Section~\ref{sub:empty}, then $S = 2^s - 1$. For an SQF, one can just assign the \textsf{empty} value to one of the unassigned fingerprints.}: $S = 2^s$. For an SQF, however, the relation is more complicated.

    When $r$ and $r'$ are fixed, for any element $e$, let $h_e^r$ be decomposed as $h_e^r = h_e^{r'} \| f$, where $h_e^{r'}$ is the $r'$-bits word used in the fingerprint, $h$ being the $r - r'$ remaining bits of $h_e^r$. For $\omega(\cdot)$ the Hamming weight function, we have $s(e) = h_{r'} \| \omega(h_e^r)$. Yet $\omega(h_e^r) = \omega(h_e^{r'}) + \omega(f)$. We know that $\omega(h_e^{r'})$ is entirely dependent on $h_e^{r'}$, which is already used in the fingerprint. Thus, if we fix $h_e^{r'}$, there are only $r - r' + 1$ possible values for $\omega(f)$ and thus for $\omega(h_e^r)$. Given that $h_e^{r'}$ can have $2^{r'}$ different values, we get that $S_{SQF} = 2^{r'} \cdot (r-r'+1)$.
    
    \subsubsection{Comparing Required Memory}
   For QHTs, we have the relation $M_\text{QHT} = N_\text{QHT}k_\text{QHT}s$. For SQF, the formula is rather $M_\text{SQF} = N_\text{SQF}k_\text{SQF}(r' + \lceil \log_2(r+1) \rceil)$ (because fingerprints occupy $r' + \lceil \log_2(r+1) \rceil$ bits).
   
  Given that $N_\text{QHT} = N_\text{SQF}$ and $k_\text{QHT} = k_\text{SQF}$, the ratio $\frac{M_\text{QHT}}{M_\text{SQF}}$ is:
   \begin{align*}
   \frac{M_\text{QHT}}{M_\text{SQF}} &=  \frac{N_\text{QHT}k_\text{QHT}s}{N_\text{SQF}k_\text{SQF}(r' + \lceil \log_2(r+1) \rceil)} = \frac{s}{r'+\lceil \log_2(r+1) \rceil}
   \end{align*}
   Given that $s = \lceil \log_2(S_\text{QHT}) \rceil = \lceil \log_2(S_\text{SQF}) \rceil = \lceil \log_2(2^{r'}\cdot(r-r'+1)) \rceil$, we have
   \begin{align*}
   \frac{M_\text{QHT}}{M_\text{SQF}} &= \frac{\lceil \log_2(2^{r'}(r-r'+1)) \rceil}{r'+\lceil \log_2(r+1) \rceil}
   									 = \frac{r' + \lceil \log_2(r-r'+1)\rceil}{r'+\lceil \log_2(r+1) \rceil}
   \end{align*}
   Using the recommended settings in \cite{Dut13}\footnote{Their optimal choice of parameters also includes setting $k=4$. However, setting $r=2$ and $r'=1$ imposes $S=4$, so that setting $k=4$ results in an FPR of 1: after some time the filter systematically responds \DUPLICATE. Using e.g. $k=3$ avoids this.} ($r = 2$, $r' = 1$) the ratio becomes $\frac{M_\text{QHT}}{M_\text{SQF}} = \frac{2}{3}$
   , which concludes the proof. 

    \subsection{Parameter Tuning}\label{sub:tuning}
    
    As noted in Section~\ref{sub:FNR}, no matter the choice of the parameters we have $\FNR + \FPR = 1$: any particular parameters choice will be a trade off between good FPR and good FNR performance, at least asymptotically. However, when the stream is small enough, one may choose parameters that will maximally delay saturation.
    
    We know that $M = Nk\log_2(S)$, so plugging this into the FPR formula gives
  \begin{align*} \FPR_m & = \frac{k}{S}\left(1-\left[1-\frac{1}{kN}\right]^m\right) \\ & = \frac{k}{S}\left(1-\left[1-\frac{\log_2(S)}{M}\right]^m\right)
  \end{align*}
    
    Which means that, for fixed $M$ and $\frac{k}{S}$ (i.e. for a fixed memory amount and a fixed asymptotic FPR), $\log_2(S)$ must be as small as possible in order to keep the FPR low for as long as possible.
    
    For instance, assume that we want an asymptotic FPR of 25\%. The potential values for the couple $(k, S)$ are $(1, 4)$, $(2, 8)$, $(3, 16)$ and so on (leading respectively to $s = 2, 3, 4$). Because of the above relation, we know that setting $k=1, s=2$ will yield the best saturation resistance for the FPR.

   In order to test this heuristic, we compared several QHT of approximately 65,536 bits, with the same ratio $\frac kS$, but a different value for $S$. We took streams of 100,000 elements (well under saturation value for this amount of memory), from an alphabet of $2^{20}$ elements. Each element of the stream was uniformly randomly selected, leading to a stream with about 4.6\% of duplicate elements. We averaged the results on 10 runs.
   
   We observe in Table~\ref{tab:comparison-S} that filters with a small value of $S$ do indeed perform better than filters with a bigger value of $S$, which concludes the experiment.

\begin{table}[H]
	\small
	\centering
    \caption{Error rates of QHTs with the same asymptotic FPR}
	\label{tab:comparison-S}
	\begin{tabular}{cccccc}
		 & $S = 4$ & $S = 8$ & $S = 16$ & $S = 32$ & $S = 64$\\ \hline
		$\FPR$ (\%) & 22.57 & 23.25 & 23.53 & 23.62 & 23.50 \\
		$\FNR$ (\%) & 35.89 & 44.24 & 50.77 & 54.55 & 58.73 \\ \hline
		$\FPR + \FNR$ & 58.45 & 67.49 & 74.30 & 78.17 & 82.23 \\

	\end{tabular}
\end{table}
    
    \section{Further improvements: QQHTD}
    
    \subsection{Keeping Track of Duplicates}\label{sub:qhtd}
    In Algorithm~\ref{a:QHT}, we do not insert anything if the element is detected as a duplicate. However, following \cite{Fan14}'s example, we can insert it anyway, resulting in a structure we call QHT with Duplicates, or QHTD. We briefly discuss its properties.

    As we showed previously, the asymptotic FPR of a QHT is $\frac{k}{S}$, which was expected: each cell stores $k$ distinct fingerprints, the probability that one of them matches the fingerprint of a unique element is logically $\frac{k}{S}$. Similarly, in QHTD each cell stores $k$ fingerprints, not necessarily distinct. The probability that at least one of these fingerprints is the same than the one of an unseen element is $\text{FPR}_\text{QHTD} = 1 - \left(1-\frac{1}{S}\right)^k$. Given the results of Section~\ref{sub:saturation}, the asymptotic FNR of QHTD is $\left(1-\frac{1}{S}\right)^k$.
    
    \subsection{Queuing Buckets for a Better Sliding Window}
    One caveat of the QHT (and QHTD) is the fact that at any insertion, any element of the row is equally likely to be evicted: if this allows an easy FPR and FNR derivation, it makes it functioning a bit counter intuitive. As a matter of fact, one would expect a filter to first forget about the oldest elements before forgetting about the newest ones. Indeed, this behaviour matches the need of a filter operating on a \emph{sliding window}, without taking into account oldest elements.
    
    The solution we provide for QHT is to order the buckets of a given cell in a FIFO queue, which means that instead of selecting a random bucket in the cell for insertion, one will append the fingerprint to the end of the queue, and pop the first element (so that the size of the queue remains constant). Combined with QHTD improvement, this yields Algorithm~\ref{alg:QHT3}.
    
    \begin{algorithm}[H]
	\caption{Queued Quotient Hash Table with Duplicates' (QQHTD) \textsf{Stream}}\label{alg:QHT3}
	\begin{algorithmic}[1]
		\For{each element $e \in E$}
        	\State result $\gets$ \UNSEEN
			\State Quotient of $e$: $h(e)$; Fingerprint of $e$: $s(e)$.
			\For{each bucket $b_i$ in the queue $T[h(e)]$}
				\If{(entry in $b_i) = s(e)$}
					\State result $\gets$ \DUPLICATE
                    \State \textbf{break}
				\EndIf
			\EndFor
			\State Pop the first element of the queue $T[h(e)]$ and append $s(e)$ at the end of same queue
			\State \Return result
		\EndFor
	\end{algorithmic}
\end{algorithm}
    
    Note that classical queues (i.e. doubly chained lists) are not suited for our use, as chains require extra storage bits for pointers. Thus, we create an array of $k$ elements, in which we manually move every element at each ``pop''. When $k$ is small (typically less than 5, which it usually is), the added overhead is not significant.
    
    Finally, note that a QQHTD with one bucket par cell is equivalent to its QHT counterpart. However, with more buckets per cell, QQHTDs offer a noticeable improvement over QHTs on real data streams (see Appendix~\ref{app:tables}). 

	\section{Benchmarks}\label{sub:benchmark}
	
		\begin{table*}[ht]
		\caption{Error rate (multiplied by $100$) on streams of 150,000,000 elements}
		\label{tab:errors}
		
		\begin{tabularx}{\textwidth}{cc *{7}{Y}}
			\hline
			
			Stream (duplicate \%)					& Memory (bits)	& SQF	& QHT QQHTD	& Cuckoo	& SBF	& A2	& b\_DBF	\\ \hline
			\multirow{4}{*}{Real (10.3 \%)}			& 8e+06		& 51.25	& 43.78	& 66.48	& 54.08	& 58.94	& 45.22	\\
			& 1e+06		& 55.18	& 48.38	& 68.96	& 57.17	& 62.12	& 45.30	\\
			& 100,000	& 58.21	& 52.22	& 71.83	& 59.44	& 64.88	& 59.39	\\
			& 10,000		& 66.45	& 58.75	& 78.86	& 76.29	& 66.42	& 99.77	\\ \hline
			\multirow{4}{*}{Artificial (88.82 \%)}	& 8e+06		& 86.49	& 82.76	& 96.94	& 97.79	& 88.06	& 99.96	\\
			& 1e+06		& 98.27	& 97.80	& 99.60	& 99.74	& 98.49	& 99.96	\\
			& 100,000	& 99.78	& 99.79	& 99.96	& 99.98	& 99.84	& 99.96	\\
			& 10,000		& 99.97	& 99.97	& 100.02& 99.99	& 100.00	& 99.98	\\ \hline
			\multirow{4}{*}{Artificial (39.79 \%)}	& 8e+06		& 96.51	& 95.37	& 99.21	& 99.40	& 96.86	& 99.99	\\
			& 1e+06		& 99.56	& 99.42	& 99.91	& 99.91	& 99.60	& 99.99	\\
			& 100,000	& 99.94	& 99.95	& 100.00	& 99.99	& 99.96	& 99.98	\\
			& 10,000		& 100.00	& 100.00	& 100.00	& 100.00	& 99.99	& 100.00	\\ \hline
		\end{tabularx}
	\end{table*}
	
	\subsection{Comparison of QHT to Other Filters}
    We used streams of 150,000,000 elements, on filters of size ranging from 10 kb to 8 Mb. In any case the filters are too small to keep track of the whole stream, and we will see that filters do reach saturation. We used 2 artificial streams, for which the elements where randomly generated from an alphabet of $2^{24}$ and $2^{27}$ elements respectively, leading to a duplicate rate of about $88\%$ and $38\%$ respectively. We also used a real dataset of URLs visited by a crawling robot, extracted from the April 2018 CommonCrawl's dump~\cite{CommonCrawl} The source code is available from on BitBucket.\footnote{\url{https://bitbucket.org/team_qht/qht/src/master}}.
    
    The filters, so their asymptotic FPR was as close as possible to the arbitrary value of 25\%, are: 
\begin{itemize}
	\item SQF, 1 bucket per row, $r = 2$ and $r' = 1$
	\item QHT, 1 bucket per row, 3 bits per fingerprint. This specific QHT is equivalent to a QQHTD with the same parameters, so we do not include the latter in the benchmark. 
	\item Cuckoo Filter\cite{Fan14}, cells containing 1 element of 3 bits each
	\item Stable Bloom Filter (SBF) \cite{Den06}, 2 bits per cell, 2 hash functions, targeted FPR of 0.02\footnote{Our benchmarks actually obtained an asymptotic FPR of around 28\%, without us being able to find bugs in our implementation.}
    \item $A2$ Filter\cite{Yoo10}, targeted FPR of $0.1$ on the sliding window.
	\item Block-Decaying Bloom Filter (b\_DBF)\cite{She08}, sliding window of 6000 elements.

\end{itemize}

Results, averaged on 5 runs, are given in Table~\ref{tab:errors}. Note that, for better readability, the error rates have been multiplied by $100$ in the table. Further more, we recall that the error rate, being defined as $E = \FPR + \FNR$, is bounded by $0$ below and $2$ above, $1$ being the error rate of a random filter. A filter can have worse results than random; for instance a filter which is always wrong has an error rate of $2$.

     As we can see in Table~\ref{tab:errors}, QHT (or QQHTD) are extremely competitive and resist very well to saturation; they also appear to be the most competitive on the real stream. b\_DBF are efficient, but reach very quickly their saturation. A more detailed analysis of their FPR (see Appendix~\ref{app:tables}) shows that even though their FPR is close to 0, they get an FNR close to 1. Moreover, as we see in Section~\ref{sub:speed}, they are significantly slower, which can be a bottleneck for critical applications.
     Further more, not only are QHT/QQHTD the most efficient filter on both real and artificial streams, they are also very easily tunable, and any asymptotic FPR rate is very simply achievable. This is not the case of other filters, such as A2 or b\_DBF, which require careful tuning. 

	\subsection{Speed Comparison}\label{sub:speed}
    We also benchmarked the speed of every filter on real-time detection, on a laptop with Intel i7. We used the same filters as in the previous subsection, with a memory of 1 Mb, on a stream of 150,000,000 elements. We averaged, on these filters, the time needed for $\textsf{Insert} \circ \textsf{Detect}$ to execute for each element. Results are shown in Table~\ref{tab:time}. We observe that safe for SQF, QHT is 6 times as fast as any other filter, and 10 times as fast as b\_DBF. Even SQF is 50\% slower than QHT. Even with additional features, QQHTD are also faster than SQF by a large magin, because fingerprint derivation is more costly in the latter. As a conclusion, we observe that QQHTD are most suited filters for high-speed analysis.

\begin{table*}[]
	\centering
	\caption{Average amount of time (in $µ$s) required for one iteration of $\mathsf{Stream}$ on each filter with 1 Mb of memory} \label{tab:time}
	
	\begin{tabular}{c|ccccccc}
			Filter    &SQF        & QHT   & QQHTD  & Cuckoo    & SBF     & A2      & b\_DBF \\
			\hline
			Time &0.423 & 0.288 & 0.330 & 2.464 & 1.578 & 1.280 & 2.565 \\
	\end{tabular}
\end{table*}

    \section{Adversarial Resistance}
    Now, despite the good performances of QQHTD on normal streams, one may not always assume that the stream is \enquote{normal}: there may be an attacker trying to fool the filter. For instance if the filter must detect duplicates in order to avoid an attack (nonce requirements), then the question of adversarial resistance is primordial.
    
    We can model an adversarial system in which an attacker has a knowledge of the output of \textsf{Stream} (i.e., whether the element is classified as \DUPLICATE or \UNSEEN), but no knowledge of the internal memory state $\mathcal{M}$. Thus the attacker is able to carry an adaptive attack, by choosing the next element to send to the filter as a function of all previous insertions. In this adversarial game, the attacker can send an arbitrary stream to the filter, and is allowed to get the result of \textsf{Stream} for every element. Then, at her convenience, the attacker goes into the second phase of the game, in which she has two possible actions:
    
    \begin{itemize}
    \item Send an unseen element that will be a false positive with high probability (false positive attack);
    \item Send a duplicate element that will be a false negative with high probability (false negative attack).
    \end{itemize}
    
    \begin{theorem}
    No filter can resist a false negative attack.
    \end{theorem}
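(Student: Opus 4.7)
My plan is to reduce the statement to the saturation analysis of Section~\ref{sub:saturation}, constructing an explicit adaptive attack. First I would have the adversary feed the filter a long preamble of $N$ pairwise distinct, never-before-seen elements $e_1, \ldots, e_N$, choosing $N$ large enough (say $N$ much bigger than any function of the memory bound $M$) for the filter to reach its saturated regime. Each $e_i$ is, by construction, truly unseen, so the adversary only needs to watch the outputs of \textsf{Stream} without doing any non-trivial reasoning at this point.

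Once the filter is saturated, I would invoke the claim of Section~\ref{sub:saturation}: the filter is then asymptotically indistinguishable from a random filter that answers \DUPLICATE with some fixed probability $p$. The only case in which the adversary cannot win is the degenerate $p=1$ (always \DUPLICATE), but such a filter already has $\FPR = 1$ and is useless as a duplicate detector, so we may exclude it. Hence $p < 1$ and the probability that a true duplicate is answered \UNSEEN is bounded below by a constant $1 - p - o(1) > 0$.

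The adversary now enters phase two: she replays the stored elements $e_1, e_2, \ldots$ one by one. Each $e_i$ is a genuine duplicate, so whenever \textsf{Stream}$(e_i)$ returns \UNSEEN a false negative has occurred, and the adversary declares victory. The probability that she fails on a single attempt is at most $p + o(1)$, so the probability that she still hasn't won after $K$ attempts is at most $(p + o(1))^K$, which tends to $0$ as $K$ grows. This is enough to conclude that no non-trivial filter of bounded memory can resist the attack.

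The step I expect to be delicate is justifying the repeated use of the saturation bound across the replay phase, since each re-query of $e_i$ may itself mutate the filter state (e.g.\ through \textsf{Insert} after a spurious \UNSEEN). I would handle this by observing that for the filters considered in this paper, a \DUPLICATE answer either leaves the state unchanged (QHT) or only re-inserts the same fingerprint at the same row (QQHTD), so the saturated regime is preserved; an \UNSEEN answer, being itself a false negative, ends the attack immediately. Thus throughout the attack the filter is governed by the same asymptotic law, and the product bound $(p+o(1))^K$ is legitimate. The only genuine assumption is the exclusion of the degenerate \enquote{always \DUPLICATE} filter, which as noted is forced by the claim of Section~\ref{sub:saturation}.
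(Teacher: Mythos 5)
Your attack has the same outline as the paper's (flood the filter with fresh distinct elements, then replay earlier ones), but you justify it by a different lemma, and that is where the gap lies. The paper's proof does not go through the saturation Claim as its engine: it argues directly from the memory bound of Section 2.1 --- a filter with $M$ bits can remember at most $M$ elements, so after $hM$ insertions of distinct fresh elements a given earlier element has been forgotten with probability at least $1-(1-\frac{1}{M})^{hM}\simeq 1-e^{-h}$, and a single replay of that element then yields a false negative with probability as close to $1$ as desired; the attack costs $\Omega(M)$ steps and the counting argument applies to an arbitrary filter. Your route instead models the saturated filter as a random filter answering \DUPLICATE with a fixed probability $p$ and multiplies failure probabilities over $K$ replays. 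Two concrete problems. First, the Claim of Section~\ref{sub:saturation} is an aggregate statement about error rates on (uniformly random) streams; it does not give a per-query lower bound of $1-p-o(1)$ on the false-negative probability for specific, adversarially chosen duplicates, nor does it give independence across successive replays, so the bound $(p+o(1))^K$ is not justified in the generality you need. Second, your repair of the state-mutation issue is explicitly restricted to ``the filters considered in this paper'' (QHT/QQHTD), whereas the theorem quantifies over \emph{all} filters; a proof of the universal statement cannot lean on structural properties of QHT. As written, your argument establishes the attack for QHT-like filters under a random-filter heuristic, not the claim that no filter resists.

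Your explicit exclusion of the degenerate always-\DUPLICATE filter ($p=1$, $\FPR=1$) is sensible --- the paper silently makes the same assumption --- and is not the issue. To close the gap in the paper's spirit, drop the random-filter modelling entirely: since at most $M$ of the $hM$ distinct preamble elements can be retained by an $M$-bit state, a replayed preamble element has been forgotten with probability roughly $1-e^{-h}$ (or at least $1-1/h$ by pure counting), so one replay of a forgotten element already produces the false negative, with no need for a product over repeated trials or for preservation of a saturated regime across queries.
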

    
    \begin{proof}
    We craft false negatives in $\Omega(M)$ steps ($M$ being the filter's memory size). Let us remind that no structure can remember more than one element per memory bit. For this reason, the structure can remember at most $M$ different elements. Consequently, if the attacker generates a stream of random unseen elements, then on average each element will stay for $M$ insertions in the filter's memory. More generally, after $hM$ insertions (for some rational $h \geq 1$), an element is forgotten with probability at least $P_\text{forgot} = 1 - (1-\frac{1}{M})^{hM} \simeq 1 - e^{-h}$.
    Thus an attacker simply generates $\Omega(M)$ unique elements before sending the first element again. $M$ can even be estimated via saturation (see Section~\ref{sub:saturation}). Given that CPU time is cheaper than memory requirements, the attacker keeps her advantage over any filter of any size.
    \end{proof}

    \begin{theorem}
    Assuming the existence of one-way functions, QHT can resist a false positive attack.
    \end{theorem}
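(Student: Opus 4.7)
\emph{Proof plan.}
The high-level plan is to instantiate the QHT's hash functions $h$ and $s$ with keyed pseudorandom functions (PRFs) whose seeds are chosen at \textsc{Setup} and never revealed to the adversary. Since one-way functions imply PRFs via the GGM construction, this instantiation is available under our hypothesis. I would then argue, via a standard reduction, that any efficient adversary wins the false-positive game with probability at most $k/S + \mathrm{negl}(\lambda)$, where $\lambda$ is the PRF's security parameter --- no better than the non-adversarial FPR ceiling already established in Section~\ref{sub:FPR}.

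First I would formalise the game: $\mathcal{A}$ adaptively queries \textsf{Stream} on elements of her choice, receiving \DUPLICATE or \UNSEEN, and eventually outputs a challenge $e^\star$ distinct from every queried element; she wins if $\textsf{Detect}(e^\star) = \DUPLICATE$. Then I would build a distinguisher $\mathcal{D}$ which, with oracle access to either the real PRF or a truly random function, simulates the QHT for $\mathcal{A}$ using its oracle in place of $h$ and $s$, and outputs $1$ iff $\mathcal{A}$ wins. When $\mathcal{D}$'s oracle is the PRF, the simulation is identically distributed to the real attack, so $\Pr[\mathcal{D}=1] = p$; when the oracle is truly random I will bound $\Pr[\mathcal{D}=1]$ by $k/S$, and the PRF distinguishing advantage then controls $|p - k/S|$.

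The heart of the argument is the random-oracle case. Because $e^\star$ is fresh, the pair $(h(e^\star), s(e^\star))$ is uniformly distributed and independent of every value the oracle revealed during the query phase: no \DUPLICATE/\UNSEEN response constrains the oracle on an input it has never been evaluated on. Conditioning on the filter state produced by $\mathcal{A}$'s queries, the challenge falls into each row with probability $1/N$ and its fingerprint equals any fixed stored value with probability $1/S$; a union bound over the at most $k$ fingerprints in the selected row yields the $k/S$ ceiling. The main obstacle I anticipate is making this independence entirely watertight: the simulator must evaluate its oracle only on elements previously supplied by $\mathcal{A}$ (never implicitly on $e^\star$), and the random cell-selection that \textsf{Stream} performs when a row is full must be drawn from $\mathcal{D}$'s own coins rather than from the oracle, lest the fresh evaluation on $e^\star$ become correlated with observed outputs. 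Once this bookkeeping is pinned down, the PRF-to-random switch together with the $k/S$ bound closes the argument.
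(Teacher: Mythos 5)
Your proposal is correct, and it is in fact more detailed than the paper's own argument, but it follows a somewhat different mechanism. The paper's proof is a two-line sketch that invokes the approach of Naor--Yogev~\cite{10.1007/978-3-662-48000-7_28}: keep the filter as is, but first pass every incoming element through a \emph{secret} one-way permutation (and use one-way hash functions), so that adaptively chosen queries give the adversary no usable control over what is actually inserted; the adaptive-resilience guarantee is essentially imported from that citation. You instead key the QHT's own hash functions $h$ and $s$ as PRFs (available from one-way functions via HILL+GGM) and give a direct hybrid reduction: simulate the filter for the adversary, switch the PRF to a truly random function, and in the random world bound the winning probability of a fresh challenge $e^\star$ by $k/S$ via uniformity and independence of $(h(e^\star),s(e^\star))$ from all responses, plus a union bound over the at most $k$ stored fingerprints in the selected row. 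What your route buys is a self-contained argument with an explicit quantitative conclusion --- the adversarial false-positive probability is at most the saturated non-adversarial ceiling $k/S$ plus a negligible term --- and it correctly isolates the bookkeeping pitfalls (domain separation between $h$ and $s$, eviction randomness drawn from the simulator's coins, never evaluating the oracle on $e^\star$ during the query phase). What the paper's route buys is modularity: by permuting inputs rather than keying the hashes, the same wrapper argument applies verbatim to any hash-based filter, which is why the paper can remark that the defence extends beyond QHT. Both are valid; if anything, your reduction makes precise the ``gains no advantage'' step that the paper leaves informal.
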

    
    \begin{proof}
    Following \cite{10.1007/978-3-662-48000-7_28} we replace all hash functions by one-way hash functions, and apply a (secret) one-way permutation on incoming elements, then classically store the results in the filter. Because of the permutation, the attacker gains no advantage in adaptively choosing the elements, thus loosing her advantage. 
    \end{proof}
    
    As a conclusion, QHTs are adapted to contexts where low false positives are crucial, such as white-list email filtering. Note however that this is the case of most filters, as long as they rely on hash functions (and so can apply \cite{10.1007/978-3-662-48000-7_28}).
	
	\section{Conclusion}
	This paper introduces a new duplicate detection filter, QHT, and its variant QHTD. QHTs achieve a better utilization of the available space, and as such are more efficient than existing filters. Moreover, QHTD have more efficiency for detecting duplicates in a real dataset.
	
	We showed that, for an infinite stream with an infinite number of unseen elements, the number of rows is less important than the fingerprint space, and the number of buckets per row. Moreover, we proved that all filters, having reached saturation, are not more efficient than random filters, and as such, a benchmarking of stream filters should only focus on the pre-saturation state, with small streams.
	
    Even though QHTs are significantly more efficient than other structures in the literature, we do not know if these filters are optimal: are there other filters having an optimal resilience to saturation? Future work also includes examining the theoretical resistance to saturation of QQHTDs, and a finer examination of the QHT/QQHTD behaviour on a sliding window.
   	
\appendix

\section{Tables of error rates for various streams}\label{app:tables}

 Table~\ref{tab:app_long} gives the error rates (FPR and FNR) used to derive Table~\ref{tab:errors}.

\begin{table*}[!t]
 \caption{Results of filters for various streams of 150,000,000 elements (FPR/FNR in \%)}
 \label{tab:app_long}

     	\begin{tabularx}{\textwidth}{cc *{7}{Y}}
         \hline

 		Stream (duplicate \%)					& Memory (bits)	& SQF		& QHT/QQHT	& Cuckoo		& SBF			& A2			& b\_DBF	\\ \hline
 		\multirow{4}{*}{Real (10.3 \%)}			& 8e+06		& 24.61/26.64	& 14.00/29.78	& 28.23/38.26	& 25.10/28.98	& 37.72/21.21	& 0.00/45.22	\\
 												& 1e+06		& 24.95/30.22	& 14.25/34.14	& 28.30/40.65	& 25.23/31.94	& 38.00/24.11	& 0.15/45.16	\\
 												& 100,000	& 24.99/33.22	& 14.28/37.94	& 28.33/43.51	& 25.42/34.03	& 38.05/26.83	& 25.75/33.64	\\
 												& 10,000		& 25.00/41.45	& 14.29/44.46	& 28.37/50.49	& 26.00/50.29	& 38.01/28.41	& 99.58/0.20	\\ \hline
 		\multirow{4}{*}{Artificial (88.82 \%)}	& 8e+06		& 21.87/64.62	& 12.02/70.74	& 27.80/69.14	& 25.94/71.85	& 35.22/52.84	& 0.00/99.96	\\
 												& 1e+06		& 24.60/73.67	& 14.00/83.80   & 28.41/71.19	& 26.43/73.30	& 37.72/60.77	& 0.17/99.79	\\
 												& 100,000	& 24.94/74.84	& 14.26/85.53	& 28.52/71.44	& 26.49/73.49	& 38.00/61.83	& 27.46/72.51	\\
 												& 10,000		& 24.98/74.99	& 14.28/85.69	& 28.55/71.47	& 26.50/73.50	& 38.03/61.97	& 99.67/0.31	\\ \hline
 		\multirow{4}{*}{Artificial (39.79 \%)}	& 8e+06		& 24.42/72.09	& 13.86/81.52	& 28.39/70.82	& 26.40/73.00   & 37.54/59.32	& 0.00/99.99	\\
 												& 1e+06		& 24.92/74.64	& 14.24/85.18	& 28.50/71.41	& 26.47/73.44	& 38.00/61.60	& 0.17/99.82	\\
 												& 100,000	& 24.99/74.96	& 14.29/85.66	& 28.51/71.49	& 26.49/73.50	& 38.05/61.91	& 27.46/72.52	\\
 												& 10,000		& 25.00/74.99   & 14.28/85.72	& 28.52/71.49	& 26.49/73.51	& 38.02/61.97	& 99.69/0.31	\\ \hline
         \end{tabularx}
     \end{table*}
    
    \section{Deriving $\FNR_\infty$}\label{app:FN_derivation}
    This derivation was removed from Section~\ref{sub:FNR} for better readability. We know that $\displaystyle\FNR_n = \frac{1}{n}\sum_{m=1}^{n}\FN_m = \frac 1n \sum_{m=1}^n \sum_{i=1}^m P_{\text{dup}, i} \cdot \FN_{i,m}$, so
    	
    \begin{align*}
    \FNR_n & = \frac{1}{n}\sum_{m=1}^{n}\sum_{i=1}^{m} \left(\frac{U-1}{U}\right)^{m-i}\frac{1}{U} \left[(1-b) \vphantom{\frac{c^m}{c}}
        	\right. \\& \qquad \left.
        	\left(1-(1-a)^{m-i}\right)
        	+ ab \frac{(1-a)^{m-i} - c^{m-i}}{1-a-c} \right]
    \end{align*}
	
	Expanding,
    
    \begin{align*}
    \FNR_n &= \frac{1}{nU} \sum_{m=1}^n \left[(1-b) \sum_{i=1}^m\left(\frac{U-1}{U}\right)^{m-i}
        	\right. \\& \qquad \left.
        	+ (1-b) \sum_{i=1}^m \left(\frac{U-1}{U} \cdot (1-a)\right)^{m-i}
			\right. \\& \qquad \left.
			+ \frac{ab}{1-a-c} \sum_{i=1}^m \left(\frac{U-1}{U}(1-a)\right)^{m-i}
			\right. \\& \qquad \left.
			+ \frac{ab}{1-a-c} \sum_{i=1}^m \left(\frac{U-1}{U}c\right)^{m-i}
        	\right]\\
    \end{align*}\begin{align*}
	\FNR_n ={}& \frac{1}{nU} \sum_{m=1}^n \left[
        (1-b)\frac{1-(1-1/U)^m}{1/U} 
        	\right. \\& \qquad \left.
			+ (1-b)\frac{1-\left((1-1/U)(1-a)\right)^m}{1-(1-a)(U-1)/U}
            \right. \\& \qquad \left.
            + \frac{ab}{1-a-c}\frac{1-\left((1-1/U)(1-a)\right)^m}{1-(1-a)(U-1)/U}
            \right. \\& \qquad \left.
            - \frac{ab}{1-a-c}\frac{1-\left((1-1/U)c\right)^m}{1-c(U-1)/U}
        \right]
    \end{align*}

    Given that $\displaystyle \FNR_\infty = \lim_{n\to \infty} \FNR_n$, using Ces\`{a}ro's mean we get:
    
    \begin{align*}
    \FNR_\infty ={}& 1-b - \frac{1-b}{U - (U-1)(1-a)}
     	\\ & 
        + \frac{ab}{(1-a-c)(U - (U-1)(1-a))}
        \\ &
        - \frac{ab}{(1-a-c)(U - (U-1)c)}
    \end{align*}

Which concludes the proof.

\section{Comparison of QHT and QQHTD}\label{app:comparison}

In this appendix, we explore the difference between a QHT and a QQHTD with the same parameters, on the same stream. We took filters of 65,536 bits each, on streams of 100,000 elements each. One stream issued from our `real' dataset (10.32\% of duplicates), the other a random uniform stream on an alphabet of $2^{20}$ elements (4.62\% of duplicates). Results are given in \ref{tab:comparision}.

We observe that while QQHTD offer no advantage on artificial streams, their performance (relative to the QHT) are noticeably better, which empirically validates the optimizations.

\begin{table*}[]
	\centering
	\caption{Error rate (times $100$) of QHTs and QQHTDs with the same parameters on different streams, depending on their parameters $k$ and $S$}
	\label{tab:comparision}
	
	\begin{tabular}{cccccc}
		 Stream (duplicate \%) & Filter & $k = 2, S = 8$ & $k = 4, S = 16$ & $k = 8, S = 32$ & $k = 16, S = 64$ \\ \hline
		\multirow{2}{*}{Real (10.3 \%)} & QHT & 27.01 & 28.07 & 28.95 & 29.3 \\
										& QQHTD & 24.67 & 24.56 & 24.42 & 24.62 \\ \hline
	    \multirow{2}{*}{Artificial (4.62 \%)} & QHT & 67.39 & 74.48 & 79.32 & 82.10 \\
											  & QQHTD & 67.91 & 74.41 & 79.19 & 82.26 \\
	\end{tabular}
\end{table*}

\bibliographystyle{acm}
\bibliography{qht_bib} 

\begin{thebibliography}{10}

\bibitem{Babcock:2004:LSA:977401.978165}
{\sc Babcock, B., Datar, M., and Motwani, R.}
\newblock Load shedding for aggregation queries over data streams.
\newblock In {\em Proceedings. 20th International Conference on Data
  Engineering\/} (March 2004), {IEEE} Computer Society, pp.~350--361.

\bibitem{Blo70}
{\sc Bloom, B.~H.}
\newblock Space/time trade-offs in hash coding with allowable errors.
\newblock {\em Commun. ACM 13}, 7 (July 1970), 422--426.

\bibitem{Borg:2014:RSD:2652524.2652556}
{\sc Borg, M., Runeson, P., Johansson, J., and M\"{a}ntyl\"{a}, M.~V.}
\newblock A replicated study on duplicate detection: Using apache lucene to
  search among android defects.
\newblock In {\em Proceedings of the 8th ACM/IEEE International Symposium on
  Empirical Software Engineering and Measurement\/} (New York, NY, USA, 2014),
  ESEM '14, ACM, pp.~8:1--8:4.

\bibitem{Che17}
{\sc Chen, H., Liao, L., Jin, H., and Wu, J.}
\newblock The dynamic cuckoo filter.
\newblock In {\em 2017 IEEE 25th International Conference on Network Protocols
  (ICNP)\/} (Oct 2017), pp.~1--10.

\bibitem{Coh03}
{\sc Cohen, S., and Matias, Y.}
\newblock Spectral bloom filters.
\newblock In {\em Proceedings of the 2003 ACM SIGMOD International Conference
  on Management of Data\/} (New York, NY, USA, 2003), SIGMOD '03, ACM,
  pp.~241--252.

\bibitem{Cor09}
{\sc Cormode, G., and Muthukrishnan, S.}
\newblock An improved data stream summary: the count-min sketch and its
  applications.
\newblock {\em Journal of Algorithms 55}, 1 (2005), 58 -- 75.

\bibitem{Den06}
{\sc Deng, F., and Rafiei, D.}
\newblock Approximately detecting duplicates for streaming data using stable
  {Bloom} filters.
\newblock In {\em {SIGMOD} Conference\/} (2006), {ACM}, pp.~25--36.

\bibitem{Dut13}
{\sc Dutta, S., Narang, A., and Bera, S.~K.}
\newblock Streaming quotient filter: A near optimal approximate duplicate
  detection approach for data streams.
\newblock {\em Proc. VLDB Endow. 6}, 8 (June 2013), 589--600.

\bibitem{Fan14}
{\sc Fan, B., Andersen, D.~G., Kaminsky, M., and Mitzenmacher, M.~D.}
\newblock Cuckoo filter: Practically better than bloom.
\newblock In {\em Proceedings of the 10th ACM International on Conference on
  Emerging Networking Experiments and Technologies\/} (New York, NY, USA,
  2014), CoNEXT '14, ACM, pp.~75--88.

\bibitem{Fan00}
{\sc Fan, L., Cao, P., Almeida, J., and Broder, A.~Z.}
\newblock Summary cache: A scalable wide-area web cache sharing protocol.
\newblock {\em IEEE/ACM Trans. Netw. 8}, 3 (June 2000), 281--293.

\bibitem{Fu15}
{\sc Fu, M., Feng, D., Hua, Y., He, X., Chen, Z., Xia, W., Zhang, Y., and Tan,
  Y.}
\newblock Design tradeoffs for data deduplication performance in backup
  workloads.
\newblock In {\em Proceedings of the 13th USENIX Conference on File and Storage
  Technologies\/} (Berkeley, CA, USA, 2015), FAST'15, USENIX Association,
  pp.~331--344.

\bibitem{Guo10}
{\sc Guo, D., Wu, J., Chen, H., Yuan, Y., and Luo, X.}
\newblock The dynamic bloom filters.
\newblock {\em IEEE Transactions on Knowledge and Data Engineering 22}, 1 (Jan
  2010), 120--133.

\bibitem{one_dup}
{\sc Jowhari, H., Saglam, M., and Tardos, G.}
\newblock Tight bounds for lp samplers, finding duplicates in streams, and
  related problems.
\newblock {\em CoRR abs/1012.4889\/} (2010), 49--58.

\bibitem{one_dup_opt}
{\sc Kapralov, M., Nelson, J., Pachocki, J., Wang, Z., Woodruff, D.~P., and
  Yahyazadeh, M.}
\newblock Optimal lower bounds for universal relation, and for samplers and
  finding duplicates in streams.
\newblock In {\em {FOCS}\/} (2017), {IEEE} Computer Society, pp.~475--486.

\bibitem{nonce}
{\sc Køien, G.}
\newblock A brief survey of nonces and nonce usage.
\newblock In {\em Securware 2015 - The Ninth International Conference on
  Emerging Security Information, Systems and Technologies"\/} (2015), SECURWARE
  '15, IARIA XPS Press, pp.~85--91.

\bibitem{Metwally:2005:DDC:1060745.1060753}
{\sc Metwally, A., Agrawal, D., and El~Abbadi, A.}
\newblock Duplicate detection in click streams.
\newblock In {\em Proceedings of the 14th International Conference on World
  Wide Web\/} (New York, NY, USA, 2005), WWW '05, ACM, pp.~12--21.

\bibitem{CommonCrawl}
{\sc Nagel, S.}
\newblock April 2018 crawl archive now available, 2018.
\newblock
  \url{http://commoncrawl.org/2018/05/april-2018-crawl-archive-now-available/}.

\bibitem{10.1007/978-3-662-48000-7_28}
{\sc Naor, M., and Yogev, E.}
\newblock Bloom filters in adversarial environments.
\newblock In {\em Advances in Cryptology -- CRYPTO 2015\/} (Berlin, Heidelberg,
  2015), R.~Gennaro and M.~Robshaw, Eds., Springer Berlin Heidelberg,
  pp.~565--584.

\bibitem{She08}
{\sc Shen, H., and Zhang, Y.}
\newblock Improved approximate detection of duplicates for data streams over
  sliding windows.
\newblock {\em Journal of Computer Science and Technology 23}, 6 (2008),
  973--987.

\bibitem{Tar12}
{\sc Tarkoma, S., Rothenberg, C.~E., and Lagerspetz, E.}
\newblock Theory and practice of bloom filters for distributed systems.
\newblock {\em IEEE Communications Surveys Tutorials 14}, 1 (First 2012),
  131--155.

\bibitem{10.1007/3-540-56922-7_6}
{\sc Wolper, P., and Leroy, D.}
\newblock Reliable hashing without collision detection.
\newblock In {\em Computer Aided Verification\/} (Berlin, Heidelberg, 1993),
  C.~Courcoubetis, Ed., Springer Berlin Heidelberg, pp.~59--70.

\bibitem{Yoo10}
{\sc Yoon, M.}
\newblock Aging bloom filter with two active buffers for dynamic sets.
\newblock {\em IEEE Trans. on Knowl. and Data Eng. 22}, 1 (Jan. 2010),
  134--138.

\end{thebibliography}
\end{multicols}

\end{document}